\documentclass[journal,twoside,web]{ieeecolor}
\usepackage{lcsys}
\usepackage{graphicx,color,amsmath,mathtools}
\usepackage[varvw,vvarbb]{newtxmath}
 \usepackage{mathrsfs}
\usepackage[tight]{subfigure}

\newcommand\rev[1]{{\color{black}#1}}

\DeclareMathOperator\sign{sgn}
\DeclareMathOperator\Indfun{\mathbb{1}}
\newcommand\amax{\ensuremath{\bar{a}}}
\newcommand\vmax{\ensuremath{\bar{v}}}
\newcommand\abs[1]{\ensuremath{\lvert#1\rvert}}

\newcommand\mc[1]{\mathcal{#1}}
\newtheorem{theorem}{Theorem}[section]
\newtheorem{lemma}[theorem]{Lemma}
\newtheorem{corollary}[theorem]{Corollary}
\newtheorem{proposition}[theorem]{Proposition}
\newtheorem{assumption}{Assumption}
\newtheorem{prob}{Problem}
\newtheorem{remark}{Remark}[section]

\newtheorem{definition}{Definition}

\makeatletter
\xdef\@endgadget#1{{\unskip\nobreak\hfil\penalty50\hskip1em\hbox{}\nobreak\hfil#1\parfillskip=0pt\finalhyphendemerits=0\par}}
\newcommand\@Endofsymbol{$\triangledown$}
\newcommand\Endofremark{\@endgadget{\@Endofsymbol}}
\makeatother
\title{Transient-Safe Platooning via Dynamic Headway}

\author{Gal Barkai, Jérémie Kreiss, Vineeth S. Varma, Irinel-Constantin Mor\u{a}rescu\thanks{Université de Lorraine, CNRS, CRAN, F-54000 Nancy, France. Emails: \textsl{\{gal.barkai, jeremie.kreiss, vineeth.satheeskumar-varma, constantin.morarescu\}@univ-lorraine.fr}.}\thanks{Work supported by ANR under grant COMMITS ANR-23-CE25-0005.} 
}

\begin{document}
\maketitle
\thispagestyle{empty}

\begin{abstract}
Managing autonomous vehicle platoons requires a delicate balance between string stability and rigorous safety. This challenge is intensified by aggressive transients, such as highway merging. Although Constant Time Headway (CTH) spacing is the industry standard for Cooperative Adaptive Cruise Control, it lacks formal safety guarantees during significant velocity deviations. This letter proposes a computationally efficient control framework that considers a linear time-invariant model for the dynamics of each vehicle, while ensuring formal transient safety and stability. By introducing a spacing policy that naturally converges to CTH at steady state, we establish platoon safety as an inductive property. We derive a non-linear and saturated control law for the lead follower and provide sufficient initial conditions to guarantee velocity non-negativity and safety throughout the platoon for any CTH-based followers' control law. Numerical examples indicate the proposed methodology may be applicable even under non-nominal setups.
\end{abstract}

\begin{IEEEkeywords}
 Autonomous vehicles, platooning, safety, multi-agent systems, open systems.
\end{IEEEkeywords}
\section{Introduction} \label{sec:intro}

For decades, autonomous vehicle platooning has faced a fundamental control challenge: simultaneously ensuring safe operation and coordinated multi-agent behavior. Early studies addressed this by developing headway policies to guarantee collision avoidance~\cite{SBS:79}. However, subsequent research revealed that disturbances from the leader can amplify down the line if not properly accounted for, causing severe velocity and spacing oscillations known as the "slinky effect"~\cite{CI:92}. To counter this, the concept of string stability was introduced, which remains fundamental across various multi-agent domains~\cite{SH:96}.

Modern strategies generally follow two paradigms. The first utilizes velocity-dependent spacing, notably the Constant Time Headway (CTH) policy~\cite{CI:92, KM:09}, which scales inter-vehicle distances with speed to dissipate disturbances. Implemented via Adaptive Cruise Control (ACC) or communication-backed Cooperative ACC (CACC), CTH robustly maintains string stability across varying speed profiles. Conversely, constant spacing policies maintain fixed geometries regardless of velocity but require a broader information flow, typically needing leader-to-all communication alongside predecessor data~\cite{SH:96}. Crucially, neither paradigm is safe under arbitrary maneuvers.  However, safety-guaranteeing policies based on quadratic velocity-dependent spacing rules, such as~\cite{SBS:79}, recover the CTH structure under synchronized velocities~\cite{IC:93}. These formulations are closely related to modern Responsibility-Sensitive Safety (RSS) safe-distance approaches and established CTH as the industry standard. Consequently, contemporary CACC research largely prioritizes string stability under communication constraints~\cite{DPH:17,SB-L:24, MPD:25}, often relying on empirical tuning for safety during minor velocity deviations~\cite{PSNWN:11,MSSNKN:14, MPD:24}.

However, the requirement for synchronized velocities is violated during aggressive transients, such as highway merges where a vehicle enters the flow at a significantly lower speed. Although recent literature addresses safety-critical platooning using control barrier functions (CBFs) \cite{GH:24, CTJM:24}, adaptive online verification \cite{MA:24}, or artificial potential functions \cite{LS-KPAN:18}, these methods often introduce a significant computational load due to high frequency re-computations in vehicular systems and do not ensure feasibility. For example, in \cite{mehra2015adaptive}, which implements CBF on an experimental testbed, a quadratic programming is solved at a 200Hz frequency.
Moreover, other safety-oriented works typically rely on static spacing policies such as constant distance \cite{LS-KPAN:18, CTJM:24} or CTH \cite{MSSNKN:14, GH:24}. These approaches do not account for the transient phase where large velocity mismatches occur.

The need for computationally efficient, rigorous safety is most acute in railway-based Automated Transit Networks (ATNs), such as Urbanloop (https://urbanloop.fr) \cite{SLSM:25} or WVU PRT \cite{wvu_prt_2025}. These systems utilize homogeneous pods with simple LTI dynamics but operate under strict operational constraints: vehicles cannot move backward, cruising velocities may be three times greater than the merging speed, and the merging point is fixed by the railway-lines. On the other hand, these systems also have several advantages over classical road vehicles such as: a centralized low-latency communication network between all pods and electrical systems with low actuation lags that can be approximated by double-integrator. Standard CACC controllers cannot guarantee velocity non-negativity during the aggressive transients induced by merging pods, while general safety-guaranteed controllers often overlook the system's inherent simplicity, leading to unnecessary computational overhead.

In this letter, we propose a control method that retains the simplicity of LTI-based CACC while providing formal transient safety guarantees. Motivated by railway-based ATNs like Urbanloop, where vehicles merge onto a main track with significant speed mismatches, we treat platoon safety as an \emph{inductive property}. By analyzing a vehicle merging in front of a synchronized platoon, we prove that, \rev{under nominal conditions,} safety is preserved for the entire string if the first follower (the original leader) maintains a non-negative velocity. To this end, we design a nonlinear control law for the first follower that ensures safety during the merge while naturally recovering CTH at steady state. We further derive sufficient conditions on the initial states that serve as a \emph{safety-admissibility criterion} to guarantee both collision avoidance and velocity non-negativity. Finally, the proposed framework preserves the standard CACC structure for the followers during the transient phase and asymptotically recovers a standard CACC configuration for the complete platoon, allowing the classical string-stability properties of the underlying CACC architecture to be recovered after the merge.

The remainder of this letter is organized as follows: Section~\ref{sec:model} formalizes the platoon model and control problem. Section~\ref{sec:inherent-safety} identifies safety properties via worst-case braking analysis. Section~\ref{sec:main} presents the nonlinear control design for the lead follower and proves the resulting safety and stability properties. Finally, Section~\ref{sec:ex} provides numerical validation, followed by concluding remarks in Section~\ref{sec:conc}.

\paragraph*{Notation} We denote by $\mathbb{R}$, $\mathbb{R}_{\geq 0}$, and $\mathbb{R}_{>0}$ the sets of real, non-negative real, and positive real numbers, respectively. We denote by $\mathbb{Z}$ the set of integers and by $\mathbb{N}$ the set of natural numbers. For $a, b \in \mathbb{Z}$ with $a \leq b$, $\llbracket a, b \rrbracket$ denotes the discrete interval $[a, b] \cap \mathbb{Z}$. For a scalar $x \in \mathbb{R}$, the sign function is defined as $\sign(x) = 1$ if $x > 0$, $-1$ if $x < 0$, and $0$ if $x = 0$. Given two Euclidean spaces $U$ and $V$, $\mc{PC}(U,V)$ is the set of piecewise continuous functions from $U$ to $V$. Given a signal $u$, the signal $\sigma_{\amax} (u)$ is its symmetric saturation to level $\amax$.

\section{System Modeling and Problem Formulation} \label{sec:model}

This section formalizes the platoon model and defines the safety and stability requirements. While motivated by electric vehicle like Urbanloop pods \cite{wang2026velocity}, the architecture is general enough to accommodate various vehicular systems. We first introduce the double-integrator dynamics and nominal CACC configuration, then characterize the initialization of a merging leader at fixed junctions. Finally, we establish the rigorous definition for safety and formulate the control problem.
%
%
We consider a mass transit system of small, autonomous electric pods operating on interconnected rails. Unlike traditional trains, these pods maintain low headways to form platoons, maximizing throughput and energy efficiency.

In this paper, we consider a platoon of $N+1$ identical vehicles $\Sigma_i$ for $i\in\llbracket 0,N\rrbracket$. In this configuration, $\Sigma_0$ acts as an uncoordinated leader merging ahead of the internal platoon $(\Sigma_1, \dots, \Sigma_N)$, as illustrated in Fig.~\ref{fig:problem_setup}.

\begin{figure}[h!]
    \centering
    \includegraphics[width=\columnwidth]{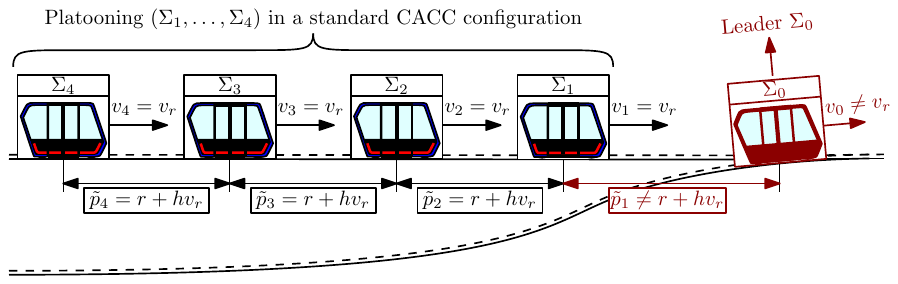}
    \caption{Urbanloop platoon initialization during a cut-in maneuver.}
    \label{fig:problem_setup}
\end{figure}

These pods utilize electric propulsion systems capable of near-instantaneous torque tracking. This high-bandwidth actuation allows the electrical dynamics to be decoupled from the mechanical kinematics \cite{wang2026velocity}. Consequently, the longitudinal dynamics of the $i$-th vehicle are accurately captured by a double-integrator model:
\begin{equation}\label{eq:vehicle}
    \Sigma_i \; :\;  
    \dot{p}_i = v_i, \quad
    \dot{v}_i = u_i
\end{equation}
where $p_i,v_i,u_i \in \mathbb{R}$ denotes the position, velocity and control input (acceleration), respectively. In nominal operation, the system is subject to physical constraints such as the acceleration is limited to $u_i \in [-\amax, \amax]$ and the velocity remains non-negative and bounded as $v_i \in [0, \vmax]$, where the constants $\vmax, \amax > 0$ represent the physical limits of the system.

Initially, the followers $(\Sigma_1,\dots,\Sigma_N)$ employ a CACC strategy to maintain CTH. Given a standstill distance $r > 0$ and time gap $h > 0$, the inter-vehicle distance $\tilde{p}_i$, relative velocity $\tilde{v}_i$, and spacing error $e_i$ for all $i \in \llbracket 1, N \rrbracket$ are defined as:
\begin{equation}
    \label{eq:spacing-error}
    \tilde p_i \coloneqq p_{i-1} - p_i, \quad  \tilde v_i \coloneqq v_{i-1} - v_i, \quad e_i \coloneqq \tilde p_i - r - h v_i.
\end{equation}
The CACC objective is to drive $e_i$ to zero while achieving velocity synchronization across the string. To formalize the state of the internal platoon prior to the disturbance introduced by the merging leader $\Sigma_0$, we define its nominal equilibrium.

\begin{definition}\label{def:Standard CACC Configuration}
   A platoon ($\Sigma_k, \dots, \Sigma_l$) with $k < l-1$ is said to be in a \emph{standard CACC configuration} when:
    \begin{equation}
        \label{eq:CACCeq}
            \tilde v_i = 0,\quad
            e_i = 0, \quad \forall i \in \llbracket k+1, l \rrbracket.
    \end{equation}
\end{definition}

Let $v_i(t)$ and $e_i(t)$ denote the time-domain trajectories of $\Sigma_i$ for $t \geq 0$, with their respective initial values denoted by $v_{i,0}$ and $e_{i,0}$. To isolate the transient safety problem of the merging pair $(\Sigma_0, \Sigma_1)$, we assume the rest of the platoon is already operating at equilibrium for a given reference velocity $v_r \in [0, \bar{v}]$, as depicted in Fig.~\ref{fig:problem_setup}, and formalized in the following assumption.
\begin{assumption}[Nominal Initialization] \label{asm:CACC-config}
    The internal platoon $(\Sigma_1,\dots,\Sigma_N)$ is initialized in a standard CACC configuration. Consequently, the initial conditions satisfy $v_{i,0} = v_r$ for all $i \in \llbracket 1, N \rrbracket$, and $e_{i,0} = 0$ for all $i \in \llbracket 2, N \rrbracket$.
\end{assumption}

The central problem addresses a ``cut-in'' maneuver (see Fig.~\ref{fig:problem_setup}), where a new leading vehicle $\Sigma_0$ merges ahead of the platoon. Because the pods operate on fixed rails they are constrained to merge at specific physical junctions and, as discussed in the introduction, the merging pod may exhibits a substantial mismatch with the cruising velocity. Consequently, the leader-follower pair $(\Sigma_0,\Sigma_1)$ is \emph{not} in a standard CACC configuration upon merging. This corresponds to an arbitrary initial leader velocity $v_{0,0} \in [0, \vmax]$ and an uncoordinated initial positive spacing error $\tilde p_{1,0}>0$.

To ground the analysis in physical reality, we formally bound the uncoordinated leader's admissible trajectories.
\begin{assumption}
    \label{asm:leader_bounds}
    Given a set of admissible initial conditions $(\tilde p_{1,0}, v_{0,0}) \in \mathbb{R}_{>0} \times [0, \vmax]$, the input trajectory $u_0(\cdot) \in \mc{PC}(\mathbb{R}_{\geq 0}, [-\amax, \amax])$ of the leader is such that its velocity remains within the physical capabilities of the system, i.e., $v_0(t) \in [0, \vmax]$ for all $t \geq 0$.
\end{assumption}
The main motivation of this letter is to ensure safety during transient phases induced by a new vehicle merging in front, where safety is defined in the following way.

\begin{definition}[Safety]
    \label{def:safety}
    For a given safety distance $d_s\in(0,r)$, a platoon $(\Sigma_k,\dots,\Sigma_l)$ with $k<l$ is said to be \emph{safe} when, evaluated along the system trajectories, the inter-vehicle distances satisfy $\tilde p_i(t) \geq d_s$ for all $t \geq 0$ and $i \in \llbracket k+1, l\rrbracket$.
\end{definition}
\begin{prob}\label{pb}
The objective of this work is threefold:
\begin{enumerate}
    \item characterize the set of initial conditions for the merging configuration under which there exists a feasible control law $u_1(\cdot)$ ensuring safety of the pair $(\Sigma_0,\Sigma_1)$;
    \item provide suitable conditions on the control laws $u_2(\cdot),\dots,u_N(\cdot)$ of the followers under which safety of the pair $(\Sigma_0,\Sigma_1)$ implies safety of the entire platoon;
    \item design a state-feedback control law $u_1(\cdot)$ ensuring safety of the platoon $(\Sigma_0,\dots,\Sigma_N)$ while asymptotically recovering a standard CACC configuration.
\end{enumerate}
\end{prob}
%
\section{Worst-case Braking and Inductive Safety}
\label{sec:inherent-safety}

This section addresses items 1) and 2) of Problem \ref{pb}. First it characterizes admissible merging configurations through a safe-set analysis of the pair $(\Sigma_0,\Sigma_1)$ under worst-case braking of $\Sigma_0$, then derives conditions under which this safety property extends inductively to the entire platoon.

\vspace{-0.25cm}
\subsection{Worst-Case Braking and the Safe Set}
Consider a pair $(\Sigma_{i-1}, \Sigma_i)$, $i\in\llbracket 1,N\rrbracket$. To guarantee safety, the pair must always reside in a state from which a collision can be avoided even under a worst-case scenario. We define this critical scenario as the preceding vehicle $\Sigma_{i-1}$ applying maximum deceleration until standstill. The safest response for $\Sigma_i$ is to likewise apply its maximum braking authority. 

For the purpose of analysis, suppose that this coordinated emergency stop is initiated at $t=0$ from an arbitrary state $(\tilde p_{i,0}, v_{i-1,0}, v_{i,0})$. The resulting control inputs are:
\begin{equation} \label{eq:u_i}
    u_k(t) = -\amax(\Indfun_{\{t \geq 0\}}-\Indfun_{\{t \geq \tau_k\}}),\quad k\in\llbracket i-1,i\rrbracket,
\end{equation}
where $\tau_k \coloneqq v_{k,0}/\amax$ denotes the time required for the $k$-th vehicle to reach standstill, and $\Indfun_{\{t \geq 0\}}$ represents the Heaviside step function. If this maneuver causes the inter-vehicle distance to fall below the threshold $d_s$, the state at which the braking commenced was fundamentally unsafe.
\begin{lemma}  \label{lem:sign_invariance}
    Let $\tilde v_{i,0} \coloneqq v_{i-1,0} - v_{i,0}$. Under the worst-case braking inputs \eqref{eq:u_i}, the relative velocity $\tilde v_i(\cdot)$ maintains the same sign depending solely on the initial conditions:
    \begin{equation}
        \sign(\tilde v_i(t)) = \sign(\tilde v_{i,0}), \quad \forall t \in [0, \max(\tau_{i-1},\tau_i)).
    \end{equation}
    Consequently, the distance $\tilde p_i(\cdot)$ is monotonic on $\mathbb{R}_{\geq0}$.
\end{lemma}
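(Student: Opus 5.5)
Under the inputs \eqref{eq:u_i} each velocity is explicit: $v_k(t)=\max(v_{k,0}-\amax t,0)$ for $k\in\llbracket i-1,i\rrbracket$. This holds because $v_{k,0}\ge 0$, both vehicles decelerate at rate $\amax$ on $[0,\tau_k)$, and they stay at rest afterwards. The plan is to write $\tilde v_i(t)$ in closed form on the time intervals cut out by $\tau_{i-1}$ and $\tau_i$, and then read off its sign.

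We split into three cases according to $\sign(\tilde v_{i,0})$. Since $\tau_k=v_{k,0}/\amax$, the ordering of $\tau_{i-1}$ and $\tau_i$ matches the ordering of $v_{i-1,0}$ and $v_{i,0}$.
\begin{itemize}
\item \emph{Case $\tilde v_{i,0}>0$, so $\tau_{i-1}>\tau_i$.} On $[0,\tau_i)$ both vehicles brake and $\tilde v_i(t)=\tilde v_{i,0}>0$. On $[\tau_i,\tau_{i-1})$ we have $v_i=0$, hence $\tilde v_i(t)=v_{i-1,0}-\amax t>0$ because $t<\tau_{i-1}$.
\item \emph{Case $\tilde v_{i,0}<0$.} This is symmetric: $\tilde v_i(t)=\tilde v_{i,0}<0$ on $[0,\tau_{i-1})$, and $\tilde v_i(t)=-(v_{i,0}-\amax t)<0$ on $[\tau_{i-1},\tau_i)$.
\item \emph{Case $\tilde v_{i,0}=0$.} The two trajectories coincide, so $\tilde v_i\equiv 0$.
\end{itemize}
This proves the sign statement on $[0,\max(\tau_{i-1},\tau_i))$. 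Note that if $\max(\tau_{i-1},\tau_i)=0$ the interval is empty and the claim is vacuous.

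For monotonicity, observe that $\dot{\tilde p}_i=\tilde v_i$. For $t\ge\max(\tau_{i-1},\tau_i)$ both vehicles are stopped, so $\tilde v_i=0$ and $\tilde p_i$ is constant there. Combining this with the constant sign on the earlier interval, $\tilde p_i$ is absolutely continuous and its derivative has a fixed sign almost everywhere on $\mathbb{R}_{\ge0}$. Hence $\tilde p_i$ is nondecreasing if $\tilde v_{i,0}\ge0$ and nonincreasing if $\tilde v_{i,0}\le0$.

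There is no real obstacle in this argument. The only point needing care is the bookkeeping of the switching times, so that on each subinterval one uses the correct branch of the $\max(\cdot,0)$ formula, together with the degenerate situations where one or both initial velocities are zero.
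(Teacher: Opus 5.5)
Your proof is correct and follows essentially the same route as the paper: you split time at the two stopping times, write $\tilde v_i$ explicitly on each phase, and read off both the sign and the monotonicity of $\tilde p_i$. Your middle-phase expressions are in fact the right ones; the paper's displayed branch $(-1)^{i-k^\star}\amax(t-\tau_{\bar k^\star})$ has its sign flipped (it should be $(-1)^{i-k^\star+1}\amax(t-\tau_{\bar k^\star})$), and your case-by-case computation avoids that slip.
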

\begin{proof}
    Let $k^\star \in \arg\min_{k \in \{i-1,i\}} \tau_k$ and $\bar k^\star = 2i-1-k^\star$ denote the indices of the vehicles reaching standstill first and second, respectively. Integrating~\eqref{eq:u_i} over the three braking phases yields the following expression for the relative velocity:
    \begin{equation}
        \label{eq:delta_v_traj}
        \tilde v_i(t) = \begin{cases} 
            \tilde v_{i,0}, & 0 \leq t < \tau_{k^\star} \\
            (-1)^{i - k^\star} \amax (t - \tau_{\bar k^\star}), & \tau_{k^\star}\leq t < \tau_{\bar k^\star} \\
            0, &  t \geq  \tau_{\bar k^\star}
        \end{cases}
    \end{equation}
    During the coordinated braking phase ($0 \leq t < \tau_{k^\star}$), $\tilde v_i(t)$ is constant. During the single-vehicle braking phase ($\tau_{k^\star} \leq t < \tau_{\bar k^\star}$), only the vehicle $\Sigma_{\bar k^\star}$ continues to decelerate at rate $\amax$. As seen in \eqref{eq:delta_v_traj}, the magnitude $|\tilde v_i(t)|$ strictly decreases toward zero without ever crossing the origin, preserving its initial sign. Since the derivative $\dot{\tilde p}_i = \tilde v_i$ does not change sign, $\tilde p_i(t)$ is monotonically increasing if $v_{i-1,0} \geq v_{i,0}$ and vice versa.
\end{proof}
To quantify the gap required to absorb the excess kinetic energy of the follower when $v_{i} > v_{i-1}$, we define the \emph{quadratic safety margin} $q_i$ and the corresponding \emph{safety function} $s_i$ as:
\begin{equation}   \label{eq:safety-function}
        q_i \coloneqq \frac{v_i^2 - v_{i-1}^2}{2\amax},  \quad 
        s_i \coloneqq \tilde p_i - d_s - \max\left(0,q_i\right),
\end{equation}
for all $i\in\llbracket 2,N\rrbracket$, 
from which $q_{i,0},s_{i,0}$ are derived using initial positions and velocities.
We can now formally define the conditions under which the pair $(\Sigma_{i-1},\Sigma_i)$ can be safe against any arbitrary leader braking maneuver.
\begin{proposition} \label{prop:safe_set}
    For an arbitrary state $(\tilde p_{i,0}, v_{i-1,0}, v_{i,0}) \in \mathbb{R}_{>0} \times [0, \vmax]^2$, the following statements are equivalent:
	{ \renewcommand{\labelenumi}{(\roman{enumi})}
    \begin{enumerate}
        \item The pair $(\Sigma_{i-1},\Sigma_i)$ is safe under the worst-case braking actions~\eqref{eq:u_i}, i.e., $\tilde p_i(t) \geq d_s$ for all $t\geq 0$;
        \item The initial state $(\tilde p_{i,0}, v_{i-1,0}, v_{i,0})$ belongs to the \emph{safe set}:
        \begin{equation}
            \label{eq:safe-set}
            \mathcal{S}_i \coloneqq \left\{ (\tilde p_i, v_{i-1}, v_i) \in \mathbb{R}_{>0} \times [0, \vmax]^2 \;\middle|\; s_i \geq 0 \right\}.
        \end{equation}
    \end{enumerate}}
\end{proposition}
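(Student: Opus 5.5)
The proof relies on Lemma~\ref{lem:sign_invariance}: under the worst-case braking inputs \eqref{eq:u_i}, $\tilde p_i(\cdot)$ is monotonic. Its infimum over $t\ge 0$ is therefore either $\tilde p_{i,0}$ or $\lim_{t\to\infty}\tilde p_i(t)$. Since both vehicles are at standstill for $t\ge \max(\tau_{i-1},\tau_i)$, this limit equals the final gap $\tilde p_i(\max(\tau_{i-1},\tau_i))$. Condition (i) is then equivalent to requiring $\min(\tilde p_{i,0},\tilde p_{i,\infty})\ge d_s$. The plan is to compute $\tilde p_{i,\infty}$ in closed form and compare it with $s_i\ge 0$.

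\textbf{Final gap.} Each vehicle brakes at rate $\amax$ from $v_{k,0}$ to standstill. Its stopping distance is therefore $v_{k,0}^2/(2\amax)$, which follows from integrating $v_k(t)=v_{k,0}-\amax t$ on $[0,\tau_k]$. Hence
\[
\tilde p_{i,\infty}=\tilde p_{i,0}+\frac{v_{i-1,0}^2}{2\amax}-\frac{v_{i,0}^2}{2\amax}=\tilde p_{i,0}-q_{i,0}.
\]
This step avoids the three-phase bookkeeping of \eqref{eq:delta_v_traj} entirely, because only endpoint positions matter.

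\textbf{Case split on the sign of $\tilde v_{i,0}$.} If $v_{i-1,0}\ge v_{i,0}$, then $q_{i,0}\le 0$ and $\max(0,q_{i,0})=0$. Lemma~\ref{lem:sign_invariance} gives $\tilde p_i$ nondecreasing, so its infimum is $\tilde p_{i,0}$, and (i) holds iff $\tilde p_{i,0}\ge d_s$, i.e.\ iff $s_{i,0}\ge 0$. If $v_{i-1,0}<v_{i,0}$, then $q_{i,0}>0$ and $\tilde p_i$ is nonincreasing, so its infimum is $\tilde p_{i,\infty}=\tilde p_{i,0}-q_{i,0}$. In this case (i) holds iff $\tilde p_{i,0}-d_s-q_{i,0}\ge 0$, i.e.\ again iff $s_{i,0}\ge0$. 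Combining the two cases gives (i)$\Leftrightarrow$(ii). The membership constraints $\tilde p_{i,0}>0$ and velocities in $[0,\vmax]$ are assumed in the statement, so they transfer directly.

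\textbf{Main obstacle.} The delicate point is justifying that the infimum of a monotone $\tilde p_i$ on $\mathbb{R}_{\ge0}$ is attained at one of the endpoints $t=0$ or $t=\max(\tau_{i-1},\tau_i)$. This needs $\tilde p_i$ to be constant after both vehicles stop, which holds because \eqref{eq:u_i} yields $v_k\equiv 0$ for $t\ge\tau_k$. It also needs the convention for $\tilde v_{i,0}=0$, where $\tilde p_i$ is constant and both cases agree. I would state these facts explicitly before the case split.
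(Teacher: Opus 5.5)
Your proposal is correct and follows essentially the same route as the paper: monotonicity of $\tilde p_i$ from Lemma~\ref{lem:sign_invariance}, the final gap $\tilde p_{i,0}-q_{i,0}$ obtained as the difference of the two braking distances, and the identification of the minimum gap condition with $s_{i,0}\ge 0$. The paper avoids your explicit case split by writing the minimum directly as $\min(\tilde p_{i,0},\tilde p_{i,0}-q_{i,0})=\tilde p_{i,0}-\max(0,q_{i,0})$, which holds whichever direction $\tilde p_i$ is monotone in.
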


\begin{proof}
    Statement (i) holds if and only if $\min_{t \geq 0} \tilde p_i(t) \geq d_s$. By Lemma~\ref{lem:sign_invariance}, $\tilde p_i(t)$ is monotonic, meaning its global minimum occurs either at $t=0$ or at the end of the braking maneuver $t = \tau_{1-k^\star}$. The final distance is the initial gap modified by the difference in kinematic braking distances $\tilde p_i(\tau_{\bar k^\star}) = \tilde p_{i,0} -q_{i,0}$. Thus, the minimum distance reads $\min_{t \geq 0} \tilde p_i(t) = \tilde p_{i,0} + \min\left(0,  -q_{i,0}\right).$ Using the identity $\min(0, -q_{i,0}) = -\max(0, q_{i,0})$, enforcing $\min_{t \geq 0} \tilde p_i(t) \geq d_s$ yields the equivalent inequality $\tilde p_{i,0} - \max\left(0,  q_{i,0}\right) \geq d_s.$ By the definition of the safety function \eqref{eq:safety-function}, this inequality is equivalent to $s_{i,0} \geq 0$, which is precisely the condition for the state to belong to the safe set $\mathcal{S}_i$.
\end{proof}
Note that the quadratic term \eqref{eq:safety-function} is used to characterize the set of initial states from which a safe trajectory exists while respecting the acceleration bounds. Since the leader's braking maneuver can begin at any point, Proposition \ref{prop:safe_set} is a non-conservative characterization of the safe set.  
\vspace{-0.4cm}

\subsection{Inductive Safety of a CACC String}
We now analyze whether a platoon in a CACC configuration is nominally safe. The following Lemma establishes that this is indeed the case provided that the leader's velocity is non-negative and that the CTH errors are autonomous (i.e. nominal conditions with exact feedforward).
\begin{lemma} \label{lem:posVelSafe}
Consider a platoon $(\Sigma_1,\dots,\Sigma_N)$ in the standard CACC configuration as in Definition~\ref{def:Standard CACC Configuration}. If, for all $i\in\llbracket 2,N\rrbracket$, the control laws $u_i$ render the dynamics of $e_i$ autonomous with an asymptotically stable equilibrium, and if $v_1(t)\geq 0$ for all $t\geq 0$, then
\[
\tilde p_i(t)\geq r>d_s,
\quad
\forall t\geq 0,\;
\forall i\in\llbracket 2,N\rrbracket.
\]
\end{lemma}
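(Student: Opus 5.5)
The key observation is that, starting from the standard CACC configuration, every spacing error stays identically zero. This reduces the claim to non-negativity of each follower's velocity, and that can be propagated down the string by induction from $v_1\geq 0$.

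\textbf{Step 1: the errors never leave zero.} By Definition~\ref{def:Standard CACC Configuration}, $e_{i,0}=0$ for all $i\in\llbracket 2,N\rrbracket$. By hypothesis, each $e_i$ obeys an autonomous ODE, independent of $v_1(\cdot)$ and of the other vehicles, whose equilibrium is the origin. Typically this is $\dot e_i=-\frac{1}{h}e_i$, or a higher-order stable LTI system in which the derivatives of $e_i$ are also initially zero because $\tilde v_{i,0}=0$ and the accelerations are equal. Uniqueness of solutions then gives $e_i(t)\equiv 0$ for all $t\geq 0$. Consequently
\[
\tilde p_i(t)=r+h\,v_i(t),\qquad \forall t\geq0,\ \forall i\in\llbracket 2,N\rrbracket .
\]
So it suffices to show $v_i(t)\geq 0$ for every $i\geq 2$ and every $t\geq 0$.

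\textbf{Step 2: induction on $i$.} The base case $v_1\geq0$ is the hypothesis. Suppose $v_{i-1}(t)\geq 0$ for all $t$. From $e_i\equiv0$ we get $\dot e_i=\tilde v_i-h\dot v_i\equiv0$, that is,
\[
h\dot v_i=v_{i-1}-v_i .
\]
This is a stable first-order filter driven by $v_{i-1}$. Its explicit solution is
\[
v_i(t)=e^{-t/h}v_{i,0}+\frac1h\int_0^t e^{-(t-s)/h}v_{i-1}(s)\,ds .
\]
Both terms are non-negative, since $v_{i,0}=v_r\geq0$ and $v_{i-1}\geq 0$, so $v_i(t)\geq0$. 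This completes the induction, and therefore $\tilde p_i(t)\geq r>d_s$, where $r>d_s$ holds by Definition~\ref{def:safety}.

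\textbf{Main obstacle.} The delicate point is Step 1: justifying that $e_i\equiv0$ follows from ``autonomous dynamics with an asymptotically stable equilibrium.''

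If the error dynamics have order higher than one, I need all initial derivatives to vanish as well. In the CACC configuration, $\dot e_i=\tilde v_i-hu_i$. This is zero initially if the configuration is understood as a true equilibrium, meaning all members have equal accelerations. I would make that interpretation explicit.

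I would also need the equilibrium to be the origin, so that zero initial data is an invariant point of the flow. I will argue this via uniqueness of solutions of the autonomous ODE: the constant zero trajectory is a solution, so it is the solution.

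Once $e_i\equiv0$ is established, Step 2 follows from the closed-form identity $h\dot v_i=v_{i-1}-v_i$ and requires no further assumption on the control laws.
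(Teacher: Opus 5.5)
Your proof is correct and follows the same route as the paper: zero initial errors plus autonomy give $e_i\equiv 0$, hence $\tilde p_i=r+hv_i$ and $h\dot v_i=v_{i-1}-v_i$, and non-negativity of velocities is then propagated down the string by induction from $v_1\geq 0$. The only differences are that you obtain $v_i\geq 0$ from the explicit variation-of-constants formula instead of the paper's boundary argument ($\dot v_i=v_{i-1}/h\geq 0$ at $v_i=0$), which handles the case $v_{i-1}=0$ more cleanly, and that you note higher-order error dynamics also require zero initial derivatives (equal initial accelerations), a point the paper leaves implicit.
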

\begin{proof}
Since $r>d_s$, the condition $s_{i,0} > 0$ follows directly from the nominal CACC configuration (\ref{asm:CACC-config}). For any pair $(\Sigma_{i-1}, \Sigma_{i})$, $i\in\llbracket 2,N\rrbracket$, the autonomous error $e_i$ being asymptotically stable and initialized to zero implies $e_i(t)=0$, $\forall t\geq 0$, such that, using~\eqref{eq:spacing-error}, we have $v_{i}(t)=(\tilde{p}_{i}(t)-r)/h$ and $u_{i}(t)=\tilde v_{i}(t)/h$. While $\tilde v_i\geq 0$, $\dot s_i\geq 0$ and safety is maintained. If $\tilde v_i <0$, $v_i$ decreases. Since $r>d_s$, continuity requires $v_i$ to reach zero before $\tilde p_i$ reaches $d_s$. At $v_i = 0$, $u_i =v_{i-1}/h\geq0$ (as long as $v_{i-1}>0$), preventing $v_i$ from becoming negative and ensuring $\tilde p_i\geq r>d_s$. The result follows by induction.
\end{proof}

Lemma~\ref{lem:posVelSafe} shows that for a platoon in standard CACC configuration, safety is an inductive property. The assumptions in Lemma~\ref{lem:posVelSafe} are standard requirements of string-stable CACC laws (e.g. \cite{PvdWN:14}), with the caveat that the autonomy of $e_i$ implies perfect V2V communication to enable decoupling via acceleration feedforward. We therefore adopt the following for the remainder of this letter:

\begin{assumption}[Nominal CACC Stability] \label{asm:cacc_stability}
The control laws $u_i$ for $i \in \llbracket 2, N \rrbracket$ ensure that each CTH error $e_i$ is autonomous and asymptotically stable.
\end{assumption}
\begin{remark}[Velocity non-negativity]
 One of the key property used in proving Lemma \ref{lem:posVelSafe} is the non-negativity of velocities. When this still holds, similar arguments can still be used even if the errors are not autonomous but only input-to-state stable with sufficient decay rate. This requires additional assumptions such as strict string stability and that $r$ is chosen sufficiently large with respect to the error bound. The analysis of this scenario is out of the scope of this letter, and is left for future work. 
 Inductive safety is demonstrated numerically in Section \ref{sec:ex} for a non ideal scenario. \Endofremark
\end{remark}

\vspace{-0.3cm}
\section{Control Design for the First Follower} \label{sec:main}
As established in Section~\ref{sec:inherent-safety} via Lemma \ref{lem:posVelSafe}, the safety of the entire string can be guaranteed if the first pair $(\Sigma_0,\Sigma_1)$ remains safe and $\Sigma_1$ maintains non-negative velocity. Consequently, Problem \ref{pb} is solved by redesigning $u_1$ to ensure:
\begin{enumerate}
\item \emph{Transient Safety:} Render the safe set $\mathcal{S}_1$ defined in~\eqref{eq:safe-set} forward invariant, ensuring collision avoidance ($s_1(t) \geq 0$) and velocity non-negativity ($v_1(t) \geq 0$) against any admissible leader trajectory $u_0(\cdot)$.
\item \emph{Asymptotic Recovery:} Ensure that the leader-follower pair $(\Sigma_0,\Sigma_1)$ asymptotically recovers the standard CACC configuration (i.e., driving $\tilde{v}_1(\cdot)$ and $e_1(\cdot)$ to the origin), provided the leader settles to a constant speed.
\end{enumerate}
By fulfilling these criteria, the control law $u_1(\cdot)$ serves as a safety buffer for the augmented platoon, allowing subsequent vehicles to operate under standard CACC laws while maintaining string-wide safety.
\vspace{-0.35cm}
\subsection{Proposed Controller}
To simultaneously address both considerations, we introduce an augmented spacing error $\bar e_1$ that explicitly incorporates the quadratic safety margin $q_1$:
\begin{equation}
    \label{eq:AugErr}
    \bar e_1 \coloneqq \tilde{p}_1 - r - h v_1 - q_1.
\end{equation}
This formulation naturally bridges the nominal tracking task and the transient safety constraint through two key properties. First, when the velocities are synchronized ($\tilde v_1 = 0$), the safety margin $q_1$ vanishes and $\bar e_1$ reduces to the standard CTH spacing error $e_1$ defined in~\eqref{eq:spacing-error}. Second, the inclusion of the margin $q_1$ ensures that the required safety distance is naturally incorporated into the tracking error.

To stabilize this augmented error, we propose the following control law:
\begin{equation} \label{eq:u1_full}
u_1 = \sigma_{\amax}\left(\frac{1}{h} \left( \tilde{v}_1 + \lambda \bar{e}_1 +  \frac{h v_0 u_0 - v_1\tilde v_1}{h\amax + v_1} \right)\right),
\end{equation}
where $\lambda>0$ is a design parameter and $\sigma_{\amax}(\cdot)$ denotes the symmetric saturation with level $\amax$. Note that~\eqref{eq:u1_full} is well-defined and continuous over the domain $v_1 > -h\amax$. Moreover, while the saturation is inactive \eqref{eq:u1_full} perfectly compensates for the kinematic disturbances from the leader while enforcing dissipative error dynamics.
\vspace{-0.25cm}
\subsection{Fundamental Closed-loop Properties}
The following lemma formally proves that the control law is globally well-posed for all safe initial states and admissible trajectories of $\Sigma_0$. Moreover, it provides a sufficient condition to ensure that $v_1(t)\geq 0$ for all admissible trajectories of $\Sigma_0$.
\begin{lemma}\label{lem:well_posed}
    Consider the pair $(\Sigma_0,\Sigma_1)$ initialized such that $s_{1,0}\geq 0$ and $v_{1,0}\in[0,\vmax]$, and assume that $\Sigma_1$ is controlled by~\eqref{eq:u1_full}. Under Assumption~\ref{asm:leader_bounds}, the velocity $v_1(t)$ remains lower bounded by some constant $v^\star>-\amax h$ for all $t\geq 0$. Moreover, if $\bar e_1(0)\geq 0$, then $v_1(t)\geq 0$ for all $t\geq 0$.
\end{lemma}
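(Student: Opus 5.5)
The plan is to work directly with the dynamics of the augmented error $\bar e_1$ in \eqref{eq:AugErr} and write them in a form that separates the nominal dissipative part from the saturation mismatch. Differentiating \eqref{eq:AugErr} along \eqref{eq:vehicle} gives
\begin{equation*}
\dot{\bar e}_1=\tilde v_1+\frac{v_0u_0}{\amax}-\frac{h\amax+v_1}{\amax}\,u_1 .
\end{equation*}
Let $u_1^\ast$ denote the unsaturated argument of $\sigma_{\amax}$ in \eqref{eq:u1_full}, so that $u_1=\sigma_{\amax}(u_1^\ast)$. A short computation, in which the cross terms $v_1\tilde v_1/(h\amax)$ cancel, yields
\begin{equation*}
\dot{\bar e}_1=-\lambda\frac{h\amax+v_1}{h\amax}\,\bar e_1+\frac{h\amax+v_1}{\amax}\,\bigl(u_1^\ast-\sigma_{\amax}(u_1^\ast)\bigr),
\end{equation*}
which is valid on the domain $v_1>-h\amax$ where \eqref{eq:u1_full} is well defined. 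All subsequent arguments are sign arguments on this identity, together with local existence of solutions. Local existence follows from the continuity of \eqref{eq:u1_full} and the piecewise continuity of $u_0$.

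\emph{Second claim ($\bar e_1(0)\ge 0 \Rightarrow v_1\ge0$).} This part is cleaner, so I would prove it first.

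Step 1: show that $\{\bar e_1\ge 0\}$ is forward invariant. At a point where $\bar e_1=0$, suppose $u_1^\ast\le-\amax$. Substituting into \eqref{eq:u1_full} and multiplying by $h(h\amax+v_1)$ reduces this to $v_0(1+u_0/\amax)\le -h\amax$. This is impossible, since $u_0\ge-\amax$ and $v_0\ge0$ by Assumption~\ref{asm:leader_bounds}. Hence on $\bar e_1=0$ only the upper saturation can be active. Upper saturation makes the mismatch term $u_1^\ast-\sigma_{\amax}(u_1^\ast)$ non-negative, so $\dot{\bar e}_1\ge0$ there, and a standard comparison/Nagumo argument gives $\bar e_1(t)\ge0$ for all $t$.

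Step 2: show $v_1$ cannot cross zero. At $v_1=0$ we have $\tilde v_1=v_0$ and the fraction in \eqref{eq:u1_full} equals $v_0u_0/\amax$. Therefore
\begin{equation*}
u_1^\ast=\frac1h\Bigl(v_0\bigl(1+u_0/\amax\bigr)+\lambda\bar e_1\Bigr)\ge0 .
\end{equation*}
It follows that $u_1\ge0$ whenever $v_1=0$, so $v_1$ never becomes negative. The two properties hold jointly, and since $v_1\ge 0>-h\amax$ the solution never leaves the well-posedness domain, so it extends to all $t\ge0$.

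\emph{First claim (general lower bound $v^\star>-h\amax$).} Here I would exploit the singular term near the boundary $v_1=-h\amax$. As $v_1\downarrow-h\amax$ we have $\tilde v_1=v_0-v_1\ge -v_1>0$, so
\begin{equation*}
h v_0u_0-v_1\tilde v_1\ \ge\ -h\amax v_0+h\amax(v_0+h\amax)=h^2\amax^2>0 .
\end{equation*}
The fraction in \eqref{eq:u1_full} is then at least $h^2\amax^2/(h\amax+v_1)\to+\infty$. Consequently, once $v_1$ is close enough to $-h\amax$, one gets $u_1^\ast>\amax$ and hence $u_1=\amax>0$, provided the remaining terms $\tilde v_1+\lambda\bar e_1$ are bounded below by a constant. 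This gives a barrier level $v^\star$ that $v_1$ cannot go below. Moreover, $v_1$ cannot decrease at all once $u_1^\ast\ge0$, and it starts from $v_{1,0}\ge0$.

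\emph{Main obstacle.} The delicate step is the uniform lower bound on $\bar e_1$ in the general case $\bar e_1(0)<0$. From $s_{1,0}\ge0$ one only gets $\bar e_1(0)\ge d_s-r-h v_{1,0}\ge-(r-d_s)-h\vmax$. When $\bar e_1<0$, lower saturation can in principle push $\bar e_1$ further down. My first attempt would show that $\bar e_1(t)\ge\min\bigl(\bar e_1(0),-C\bigr)$ for an explicit $C$. Wherever $u_1^\ast\le-\amax$, the inequality obtained as in Step~1 forces $\lambda\bar e_1\le -h\amax-v_1-v_0(1+u_0/\amax)\le -h\amax-v_1$. So lower saturation can only occur while $\lambda\bar e_1+v_1\le -h\amax$, i.e., while $v_1$ is itself bounded above by $-h\amax-\lambda\bar e_1$. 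Combining this with the fact that outside saturation $\abs{\bar e_1}$ contracts should yield a bounded invariant strip for $(\bar e_1,v_1)$; the barrier argument above then closes the proof.
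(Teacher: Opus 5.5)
\textbf{Second claim.} Your argument for the second claim is sound and follows the paper's route. The unified identity for $\dot{\bar e}_1$ with the mismatch term $u_1^\ast-\sigma_{\amax}(u_1^\ast)$ is a tidy packaging of the paper's case split. Ruling out lower saturation on $\{\bar e_1=0\}$ replaces the paper's remark that $\dot{\bar e}_1>0$ under lower saturation. Your evaluation of $u_1^\ast$ at $v_1=0$ is exactly the paper's.

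\textbf{The gap is in the first claim.} Your barrier at $v_1=-h\amax$ needs a uniform lower bound on $\bar e_1$, and you leave this as an unresolved ``main obstacle.'' Your sketch for it also rests on an incorrect inequality. Redoing the Step~1 computation with $\bar e_1\neq 0$ shows that $u_1^\ast\le-\amax$ is equivalent to $\lambda\bar e_1\bigl(1+v_1/(h\amax)\bigr)\le -h\amax-v_0(1+u_0/\amax)$. It is not equivalent to $\lambda\bar e_1\le -h\amax-v_1-v_0(1+u_0/\amax)$. Beyond that, the ``bounded invariant strip'' is asserted rather than proved.

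\textbf{The missing fact.} Your worry that lower saturation can push $\bar e_1$ further down is unfounded. Substituting $u_1=-\amax$ directly into $\dot{\bar e}_1=\tilde v_1+v_0u_0/\amax-\frac{h\amax+v_1}{\amax}u_1$ gives $\dot{\bar e}_1=h\amax+v_0(1+u_0/\amax)\ge h\amax>0$. This is precisely the fact the paper uses. In the unsaturated and upper-saturated regimes, your identity gives $\dot{\bar e}_1\ge-\lambda\bigl(1+v_1/(h\amax)\bigr)\bar e_1$, which is positive when $\bar e_1<0$. Hence $\dot{\bar e}_1>0$ whenever $\bar e_1<0$ and $v_1>-h\amax$. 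So $\bar e_1(t)\ge\underline{e}\coloneqq\min(\bar e_1(0),0)$ on any interval on which $v_1>-h\amax$.

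\textbf{Closing the barrier.} Writing $u_1^\ast=\frac{\lambda}{h}\bar e_1+\frac{v_0(\amax+u_0)-\amax v_1}{h\amax+v_1}$, you get $u_1^\ast\ge\frac{\lambda}{h}\underline{e}-\frac{\amax v_1}{h\amax+v_1}$. The right-hand side is strictly positive for $v_1<v^\star\coloneqq\lambda h\amax\,\underline{e}/(h\amax-\lambda\underline{e})\in(-h\amax,0]$. This closes your barrier argument with an explicit constant, and it reduces to $v^\star=0$ when $\bar e_1(0)\ge0$. The same monotonicity fact also gives forward invariance of $\{\bar e_1\ge 0\}$ in your Step~1 directly, without a separate Nagumo argument.
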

\begin{proof}
   Differentiating~\eqref{eq:AugErr} along the trajectories of $(\Sigma_0,\Sigma_1)$, substituting the margin derivative $\dot{q}_1 = (v_1 u_1 - v_0 u_0)/\amax$, and factoring the terms associated with the control input $u_1$ gives:
    \begin{equation*}
        \dot{\bar{e}}_1 = \tilde{v}_1 + \frac{v_0 u_0}{\amax} - \left( \frac{\amax h + v_1}{\amax} \right) u_1.
    \end{equation*}    
    We first establish that $v_1\geq v^\star> -\amax h$.
    The proof is divided into two cases depending on the saturation regime of~\eqref{eq:u1_full}.

    First, assume that the control does not saturate. Substituting \eqref{eq:u1_full} into the previous expression cancels the kinematic feedforward terms ($\tilde{v}_1$ and $v_0 u_0/\amax$), yielding
    \begin{equation} \label{eq:e1_dot}
        \dot{\bar{e}}_1 = -\lambda\left(1+\frac{v_1}{\amax h}\right) \bar{e}_1.
    \end{equation}
    Whenever $v_1> -\amax h$, the term $(1+v_1/(\amax h))$ is strictly positive. Since $\lambda>0$, the dynamics are dissipative, and $|\bar{e}_1(t)| \leq |\bar{e}_1(0)|$. Moreover, using standard integrating factor methods \eqref{eq:e1_dot} admits the explicit solution
    \[
     \bar{e}_1(t)=\exp \left(-\lambda\int_0^t  \left(1 + \frac{v_1(\tau)}{\amax h}\right)\text{d}\tau\right)\bar{e}_1(0),
    \]
    showing that $\bar e_1(t)$ preserves its initial sign as long as $v_1>-\amax h$.
    This imply that $\bar e_1$ remains bounded as long as $v_1> -\amax h$. We now show that $v_1$ cannot reach $-\amax h$. By substituting the worst-case leader deceleration $u_0 = -\amax$, we obtain
    \begin{equation}
        \label{eq:v1_dot_bound}
        u_1
        \geq
        \frac{\lambda}{h}\bar e_1
        -
        \frac{\amax v_1}{\amax h+v_1}.
    \end{equation}
        Since $v_1$ is continuous, $v_{1,0}\in[0,\vmax]$, and $\bar e_1$ remains bounded while $v_1>-\amax h$, the second term in~\eqref{eq:v1_dot_bound} diverges to $+\infty$ as $v_1\to-\amax h$. Consequently, there exists $v_1^\star>-\amax h$ such that
        \[
            u_1>0,
            \qquad
            \forall v_1\le v_1^\star.
        \]
Therefore, whenever $v_1$ approaches $-\amax h$, one has $\dot v_1=u_1>0$, implying that the velocity is driven away from the boundary.
        
Assume now that the control saturates, so that $u_1=\sign(u_1)\amax$ and the error dynamics become
\[
      \dot{\bar{e}}_1 =v_0\left(1 +\frac{u_0}{\amax}\right) - \left( \amax h + v_1 \right) \sign(u_1)-v_1.
\]
If $\sign(u_1)=+1$, then $\dot{v}_1=\amax>0$, implying that $v_1$ increases and therefore remains strictly above $-\amax h$. If $\sign(u_1)=-1$ with $-\amax h <v_1<0$, then $\dot{\bar{e}}_1 \geq \amax h >0$,  meaning that $\bar{e}_1$ is increasing. From \eqref{eq:v1_dot_bound} we deduce that $u_1$ increases and there exists some $v^*>-\amax h$ such that   $\frac{\lambda}{h}\bar{e}_1 =\frac{\amax v^*}{\amax h + v^*} $
indicating that $v_1$ cannot escape in finite time, proving the first part of the statement.

We now prove positivity of $v_1$. Note that $v_1$ becomes negative if and only if at $v_1=0$ one has $\dot{v}_1=u_1<0$. Evaluating the control at $v_1=0$ when $u_1$ is not saturating, we obtain $u_1=\frac{\lambda}{h}\bar{e}_1+\frac{v_0(u_0+\amax)}{\amax h} \geq \frac{\lambda}{h}\bar{e}_1,$
hence non-negativity of $\bar{e}_1$ is a sufficient condition for the non-negativity of $v_1(t)$.  If the control does not saturate, we already showed that $\bar{e}_1$ does not change sign, thus $\bar{e}_1(0)\geq 0$ ensures that $v_1\geq 0$.\\ Consider now the saturated case. As established above, if $\sign(u_1)=-1$ then $\dot{\bar{e}}_1>0$ and the error cannot change from positive to negative. If $u_1=\amax$ then
\(
      \dot{\bar{e}}_1 =v_0\left(1 +u_0/\amax\right) - \left( 2v_1+\amax h \right). 
\)
For $\bar{e}_1=0$ the term in the saturation of~\eqref{eq:u1_full} is bigger than $\amax$, which rewrites as 
$
\frac{v_0(u_0+\amax)-v_1\amax}{\amax h+v_1}\geq \amax,
$
which, after direct algebraic manipulations yields $\left(v_0(1+u_0/\amax)-(2v_1+\amax h)\right) \geq 0$. Consequently, $\dot{\bar{e}}_1\geq 0$ and the error remains non-negative. Concluding, the system is well posed and if $\bar{e}_1(0)\geq 0$ then $v_1(t) \geq 0$ for all $t\geq 0$.
\end{proof}

Lemma~\ref{lem:well_posed} establishes two fundamental results. First, the well-posedness of the saturated control law \eqref{eq:u1_full} by strictly bounding the state away from the singularity $v_1 = -h\amax$. Second, the non-negativity of $v_1(t)$ assuming that $\bar{e}_1(0)\geq 0$. In the following subsection, we shall use these facts to prove that $\bar{e}_1(t)$ is globally asymptotically stable under \eqref{eq:u1_full}, and that it indeed recovers CACC configuration.

 \vspace{-0.4cm}
 \subsection{Main result}
 With this mathematical foundation secured, we now formally state the primary safety and tracking guarantees of the proposed controller.
\begin{theorem}\label{thm:safety}
Consider the leader-follower pair $(\Sigma_0, \Sigma_1)$ controlled by $u_1$ as defined in \eqref{eq:u1_full} and suppose  Assumption~\ref{asm:leader_bounds} holds. If the control gain is chosen such that:
\begin{equation}\label{eq:lambda_bound}
    \lambda \geq \frac{\amax h}{r-d_s},
\end{equation}
then the safe set $\mathcal{S}_1$ given in~\eqref{eq:safe-set} is forward invariant. Moreover, if the uncoordinated leader vehicle eventually maintains a constant cruising speed, such that $u_0(t)=0$ for $t \geq t_c$, the closed-loop system asymptotically recovers the standard CACC configuration, i.e., $\lim_{t\to+\infty} \bar{e}_1(t) = 0$ and $\lim_{t\to+\infty} \tilde{v}_1(t) = 0$.
\end{theorem}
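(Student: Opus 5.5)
We will prove the theorem in two parts: forward invariance of $\mathcal{S}_1$, then asymptotic recovery of the standard CACC configuration.

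\textbf{Forward invariance of $\mathcal{S}_1$.} The first step is to relate $s_1$ to the augmented error $\bar e_1$. From~\eqref{eq:AugErr} and~\eqref{eq:safety-function},
\[
s_1=\bar e_1 + r - d_s + h v_1 + q_1 - \max(0,q_1)=\bar e_1 + (r-d_s) + h v_1 + \min(0,q_1).
\]
So it suffices to show that $\bar e_1 + (r-d_s) + h v_1 + \min(0,q_1)\geq 0$ is preserved along trajectories.

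We would split according to the sign of $\bar e_1(0)$.

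\emph{Case $\bar e_1(0)\geq 0$.} Lemma~\ref{lem:well_posed} gives $\bar e_1(t)\geq 0$ and $v_1(t)\geq 0$ for all $t$. The sign preservation of $\bar e_1$ is shown in its proof in both the saturated and unsaturated regimes. If $q_1\geq 0$, then $s_1\geq r-d_s>0$ follows directly. If $q_1<0$, then $v_1<v_0$, so the pair is separating and $\tilde p_1$ is increasing.

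\emph{Case $\bar e_1(0)<0$.} Here $s_{1,0}\geq 0$ must be used. This is where the gain bound~\eqref{eq:lambda_bound} enters. In the unsaturated regime, \eqref{eq:e1_dot} gives
\[
\dot{\bar e}_1\geq -\lambda\bar e_1 \geq 0 \quad\text{whenever } v_1\geq 0 \text{ and } \bar e_1<0,
\]
so $|\bar e_1|$ decays.

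The cleanest route is a barrier argument. Assume $s_1=0$ at some time $t^\star$, and compute $\dot s_1$ there. In the branch $q_1\geq 0$,
\[
\dot s_1=\tilde v_1-\dot q_1=\tilde v_1-\frac{v_1u_1-v_0u_0}{\amax}.
\]
Substituting the unsaturated $u_1$ and using $\bar e_1=-(r-d_s)-hv_1$ on the boundary yields a term of the form
\[
\frac{v_1}{\amax h}\Bigl(\lambda(r-d_s)-\amax h\Bigr)+\dots
\]
This is nonnegative exactly when $\lambda\geq \amax h/(r-d_s)$, and the worst case is $u_0=-\amax$. In the branch $q_1<0$, we have $\dot s_1=\tilde v_1>0$ on the boundary. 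If $u_1$ saturates at $-\amax$, then braking at full authority is exactly the worst-case response of Proposition~\ref{prop:safe_set}, and $\dot s_1\geq 0$ follows from $u_0\geq-\amax$. Saturation at $+\amax$ cannot occur on the boundary, because the argument of $\sigma_{\amax}$ is negative there.

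A Nagumo-type argument then gives invariance. This requires checking that $v_1$ stays within $[0,\vmax]$ on the relevant branch; Lemma~\ref{lem:well_posed} provides $v_1>-\amax h$. I expect the computation of $\dot s_1$ on the boundary, across the saturation cases, to be the main obstacle. If needed, I would first verify the unsaturated case symbolically, and then argue that saturation at $-\amax$ only helps.

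\textbf{Asymptotic recovery.} For $t\geq t_c$ the leader speed $v_0$ is constant. We first show that saturation is eventually inactive. Since $|\bar e_1|$ is nonincreasing when unsaturated and $v_1$ is bounded, the saturated phases are finite: $u_1=\pm\amax$ drives $v_1$ monotonically toward the region where the argument of $\sigma_{\amax}$ is small.

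With saturation inactive, \eqref{eq:e1_dot} and $v_1\geq v^\star>-\amax h$ give $\dot{\bar e}_1\leq -\lambda c\,\bar e_1$ with $c=1+v^\star/(\amax h)>0$. Hence $\bar e_1\to 0$ exponentially.

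Next, with $u_0=0$ the closed loop for $v_1$ reads
\[
h\dot v_1=\tilde v_1\,\frac{h\amax}{h\amax+v_1}+\lambda\bar e_1 .
\]
This is a scalar system in $v_1$ driven by a vanishing input. The term $\tilde v_1\,h\amax/(h\amax+v_1)$ is strictly sign-definite in $\tilde v_1=v_0-v_1$, so $v_1\to v_0$ by a converging-input converging-state argument, for instance with the Lyapunov function $\tilde v_1^2$. Therefore $\tilde v_1\to 0$. Finally $q_1\to 0$, so $e_1=\bar e_1+q_1\to 0$, and the standard CACC configuration is recovered.
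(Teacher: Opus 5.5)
\textbf{Forward invariance.} Your argument here is the paper's, apart from one step you only assert:
Nagumo on $\{s_1=0\}$; your split on the sign of $q_1$ in place of the paper's split on the sign of $\tilde v_1$; the same unsaturated bound $\dot s_1\ge\frac{\lambda v_1}{\amax h}(hv_1+r-d_s)-\frac{\amax hv_1}{\amax h+v_1}$, which is where the gain condition enters; and the same observation that $u_1=-\amax$ gives $\dot s_1=v_0(1+u_0/\amax)\ge0$.

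Two smaller points:
the detour through the sign of $\bar e_1(0)$ is unnecessary. The sign of your leading term rests on $v_1\ge0$, which holds at boundary points of $\mathcal S_1$ by definition, so $q_1\ge0$ gives $v_1\ge v_0$. The bound $v_1>-\amax h$ is not the relevant fact. Keeping $v_1$ in $[0,\vmax]$ along trajectories is a separate issue, which the paper's proof does not address either.

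The step you only assert is the exclusion of $u_1=+\amax$, and as stated for the whole boundary it is false. At $v_1=0$, $v_0=\vmax$, $\tilde p_1=d_s$ one has $s_1=0$ and $\bar e_1=d_s-r+\vmax^2/(2\amax)$. The saturation argument then exceeds $\amax$ as soon as $\bar e_1>h\amax/\lambda$. This is harmless there, because $q_1<0$ and $\dot s_1=\tilde v_1>0$ whatever $u_1$ is. On the branch that matters you need the paper's computation. For $s_1=0$, $q_1\ge0$ the argument equals $-\lambda v_1-\frac{\lambda}{h}(r-d_s)+\frac{v_0(\amax+u_0)-v_1\amax}{v_1+h\amax}$. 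Since $v_1\ge v_0\ge0$, the fraction is smaller than $\amax$, and the gain bound then gives a value below $-\lambda v_1\le0$.

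\textbf{Asymptotic recovery.} Here your route differs from the paper's. The paper uses $W=\tfrac12u_1^2+\frac{\lambda}{2h}\tilde v_1^2$, with $\dot W=-\bigl(\lambda(1+\frac{v_1}{\amax h})+\frac{\amax(v_0+\amax h)}{(v_1+\amax h)^2}\bigr)u_1^2$, together with LaSalle. You use a cascade: exponential decay of $\bar e_1$, then converging-input converging-state for $\tilde v_1$. State the decay for $\abs{\bar e_1}$, since for $\bar e_1<0$ your inequality points the wrong way. The cascade is more elementary, gives a rate, and needs no precompactness. The scalar step is fine, but note that $\amax\tilde v_1/(h\amax+v_1)\to-\amax$ as $\tilde v_1\to-\infty$, so you only have ISS for small inputs; that suffices because $\bar e_1\to0$.

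The genuine gap is your first step, eventual inactivity of the saturation. Finiteness of each saturated episode does not rule out infinitely many of them. Nor is $\abs{\bar e_1}$ monotone across episodes: under $u_1=+\amax$, $u_0=0$ one has $\dot{\bar e}_1=v_0-2v_1-\amax h$, which is positive with $\bar e_1>0$ whenever $v_1<(v_0-\amax h)/2$. The paper's $\dot W$ identity likewise holds only where $u_1$ is unsaturated, so this is the real difficulty either way.

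One way to close it for $t\ge t_c$, with $u_n$ denoting the saturation argument:
on $u_n=-\amax$ one has $\dot u_n>0$, so that regime is never re-entered. On $u_n=\amax$ with $\bar e_1\ge0$ one has $\dot u_n<0$, and $\bar e_1$ cannot cross zero downward during a $+\amax$ episode. Hence once $\bar e_1\ge0$, the state is eventually unsaturated for good. Whenever $\bar e_1<0$ and $u_1\ne-\amax$, one has $\dot{\bar e}_1\ge\lambda(1+v^\star/(\amax h))\abs{\bar e_1}$. Together these yield $\bar e_1\to0$. Your scalar argument then runs directly on $\dot v_1=\sigma_{\amax}\bigl(\frac{\amax\tilde v_1}{h\amax+v_1}+\frac{\lambda}{h}\bar e_1\bigr)$, because $\sigma_{\amax}$ preserves sign.
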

\begin{proof}
By Nagumo's Theorem~\cite{blanchini1999set}, $\mathcal{S}_1$ is forward invariant if $\dot{s}_1 \geq 0$ whenever $s_1=0$. Let us show that $\dot{s}_1 \geq 0$ when $\tilde{v}_1 \geq 0$ and $\tilde{v}_1 < 0$ separately.

\emph{Case 1 ($\tilde{v}_1 \geq 0$):} Here, $q_1 \leq 0 \Rightarrow \max(0,q_1)=0$, reducing the safety function to $s_1 = \tilde{p}_1 - d_s$. On the boundary $s_1=0$, we have $\dot{s}_1 = \tilde{v}_1 \geq 0$, trivially satisfying the condition.

\emph{Case 2 ($\tilde{v}_1 < 0$):}  In this case, $q_1>0$ and $s_1 = \bar e_1 + r - d_s + hv_1.$ We evaluate $\dot s_1$ on the boundary $s_1=0$.
Assume first that the control is not saturated. Substituting~\eqref{eq:e1_dot} into the expression of $\dot s_1$ from~\eqref{eq:safety-function} and using that $u_0\ge-\amax$ yields
\begin{equation} \label{eq:dot-s1}
\dot s_1
\ge
\frac{\lambda v_1}{\amax h}(hv_1+r-d_s)
-
\frac{\amax hv_1}{\amax h+v_1}.
\end{equation}
Since $v_1>0$, enforcing $\dot s_1\ge0$ and dividing by $v_1$ yields the sufficient condition~\eqref{eq:lambda_bound}.
Assume now that the control is saturated. Let $u_{n}$ be the argument of the saturation operator in~\eqref{eq:u1_full}. We evaluate $u_n$ at the boundary to show that $u_1 = -\amax$. Using $\bar e_1=d_s-r-hv_1$ yields
\begin{equation}
\label{eq:u-nom}
u_n = -\lambda v_1
-\frac{\lambda}{h}(r-d_s)
+
\frac{v_0(\amax+u_0)-v_1\amax}{v_1+h\amax}.
\end{equation}
Because $\tilde v_1<0$, one has $v_1>v_0\ge0$, implying $v_0(\amax+u_0)-v_1\amax
<
v_0u_0
\le
v_0\amax$ and $\frac{v_0}{v_1+h\amax}<1$. Hence, using~\eqref{eq:u-nom}, $u_n < -\lambda v_1 - \lambda (r-d_s)/h+\amax$.
Using~\eqref{eq:lambda_bound}, we obtain that $u_n\leq-\lambda v_1$, which is strictly negative. 
Consequently, $u_1 = -\amax$.
Substituting this into the time derivative of the safety margin gives $\dot s_1=v_0\left(1+u_0/\amax\right),$ which is non-negative for all admissible $v_0$ and $u_0$ under Assumption~\ref{asm:leader_bounds}. Hence, $\mathcal S_1$ is forward invariant in all regimes. Now we show that we recover CACC configuration. Assume that there exists some $t_c\geq 0$ such that for $t \geq t_c$, the leader acceleration is null ($u_0(t) = 0$). Let us consider the following Lyapunov candidate:
\[
W = \int_0^{u_1} sds + \frac{\lambda}{2h}\tilde{v}_1^2,
\]
which is radially unbounded, positive semi-definite, and differentiable in $u_1$ and $v_1$. Taking its time derivative yields:
%
\[
\dot{W} = - \left( \lambda \left( 1 + \frac{v_1}{\amax h} \right) + \frac{\amax(v_0 + \amax h)}{(v_1 + \amax h)^2} \right) u_1^2.
\]
%
By Lemma \ref{lem:well_posed} the follower velocity is strictly lower-bounded by some $v^\star > -\amax h$, guaranteeing that the bracketed term is strictly positive. Therefore, $\dot{W} \leq 0$. By LaSalle's Invariance Principle, the system converges to the largest invariant set where $\dot{W} = 0$, implying $u_1 \to 0$.  Equating $u_1 = 0$ in \eqref{eq:u1_full} with $u_0 = 0$ implies that $\tilde{v}_1 \to 0$, which subsequently forces $\bar{e}_1 \to 0$, concluding the proof.
\end{proof}
These results can be extended to the entire platoon by applying Lemma \ref{lem:posVelSafe}, as shown in the following corollary.
\begin{corollary}
    Consider a platoon $(\Sigma_0,\ldots,\Sigma_N)$ satisfying assumptions \ref{asm:CACC-config}, \ref{asm:leader_bounds} and \ref{asm:cacc_stability}. If Theorem \ref{thm:safety} holds, then the entire platoon $(\Sigma_0,\ldots,\Sigma_N)$ recovers the standard CACC configuration. Moreover, if $\bar{e}_1(0)\geq 0$ then it is always safe.
\end{corollary}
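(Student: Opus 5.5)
The corollary follows by chaining Theorem~\ref{thm:safety}, Lemma~\ref{lem:well_posed} and Lemma~\ref{lem:posVelSafe}. I would prove its two claims separately.

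\textbf{Recovery of the standard CACC configuration.} By Theorem~\ref{thm:safety}, once $u_0(t)=0$ for $t\geq t_c$ we have $\bar e_1(t)\to0$ and $\tilde v_1(t)\to 0$. Since $q_1=-\tilde v_1(v_0+v_1)/(2\amax)$ and both velocities are bounded, $q_1\to0$; the uniform lower bound on $v_1$ from Lemma~\ref{lem:well_posed} and the bound $\vmax$ on $v_0$ give the needed boundedness. Hence $e_1=\bar e_1+q_1\to 0$. For $i\in\llbracket 2,N\rrbracket$, Assumptions~\ref{asm:CACC-config} and~\ref{asm:cacc_stability} make each $e_i$ autonomous, asymptotically stable and initialized at zero, so $e_i(t)\equiv 0$. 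Then $v_i=(\tilde p_i-r)/h$, and differentiating gives $\dot v_i=\tilde v_i/h$. This is a cascade of stable first-order filters $\dot v_i=(v_{i-1}-v_i)/h$ driven by $v_1$. Since $v_1\to v_0(t_c)$ (a constant), induction on $i$ gives $v_i\to v_0(t_c)$, i.e.\ $\tilde v_i\to0$ for all $i$. This is exactly \eqref{eq:CACCeq} for the whole string.

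\textbf{Safety.} First, for the pair $(\Sigma_0,\Sigma_1)$: its initial state lies in $\mathcal S_1$, which I take to be implicit in ``Theorem~\ref{thm:safety} holds'', i.e.\ $s_{1,0}\geq0$ with admissible merging conditions. Forward invariance of $\mathcal S_1$ then gives $s_1(t)\geq0$, so $\tilde p_1(t)\geq d_s+\max(0,q_1)\geq d_s$. Second, $\bar e_1(0)\geq0$ together with Lemma~\ref{lem:well_posed} gives $v_1(t)\geq0$ for all $t$. Third, the sub-platoon $(\Sigma_1,\dots,\Sigma_N)$ starts in standard CACC configuration by Assumption~\ref{asm:CACC-config}, and its control laws satisfy the hypothesis of Lemma~\ref{lem:posVelSafe} by Assumption~\ref{asm:cacc_stability}. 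Since $v_1\geq0$, Lemma~\ref{lem:posVelSafe} yields $\tilde p_i(t)\geq r>d_s$ for $i\geq2$. Combining the two bounds gives safety of $(\Sigma_0,\dots,\Sigma_N)$ in the sense of Definition~\ref{def:safety}.

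\textbf{Main obstacle.} The delicate step is the convergence $\tilde v_i\to0$ along the string. It needs the cascade argument above, namely that a stable LTI filter driven by a convergent input converges to the same limit, applied inductively. It also needs $e_1\to0$, which is deduced from $\bar e_1\to0$ through the boundedness of $v_0+v_1$.
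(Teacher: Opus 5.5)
Your proposal is correct and takes the same route as the paper, whose proof is just the one-line chaining of Lemma~\ref{lem:posVelSafe}, Lemma~\ref{lem:well_posed} and Theorem~\ref{thm:safety}; you also spell out what the paper leaves implicit, namely the need for $s_{1,0}\geq 0$, the step $e_1=\bar e_1+q_1\to 0$, and the cascade $\dot v_i=\tilde v_i/h$ giving $\tilde v_i\to 0$ for $i\geq 2$. One small correction: the bound on $v_0+v_1$ needed for $q_1\to 0$ comes from $v_1=v_0-\tilde v_1$ with $\tilde v_1\to 0$ (indeed $v_1\to v_0(t_c)$ gives $q_1\to 0$ directly), not from the lower bound $v^\star$ of Lemma~\ref{lem:well_posed}, which alone does not bound $v_1$ from above.
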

\begin{proof}
 This follows directly from the inductive safety property of Lemma~\ref{lem:posVelSafe}, the positivity condition of Lemma~\ref{lem:well_posed}, and Theorem \ref{thm:safety}.
\end{proof}
Once the augmented error and relative velocity converge to the origin, recovering standard CACC configuration, a seamless transition to a classical CACC law can be performed to potentially recover long-term string stability. This dual-stage approach potentially allows $\Sigma_1$ to act as a specialized safety filter for the merging transient without compromising the attenuation of downstream disturbances during cruising. 
 \vspace{-0.15cm}
\section{Numerical example}\label{sec:ex}
To illustrate the potential of the proposed augmented control law, consider a platoon of $N=4$ vehicles in cruising mode. The platoon is initialized in a standard CACC configuration with $v_{i,0}=v_r=7 m/s$, where the followers ($i=2,3,4$) employ the classical CACC law:
%
\[
h\dot{u}_i= -u_i + Ke_i + Kh\dot{e}_i + u_{i-1}, \quad K=1/\sqrt{3},\; h=0.7
\]
%
as proposed in \cite[Sec. III]{PvdWN:14}. At $t=0$, a new vehicle $\Sigma_0$ joins the platoon at $v_{0,0}=0.5v_r$ and immediately brakes at full capacity until it reaches a standstill. Subsequently, its acceleration oscillates before eventually recovering to the cruising speed $v_r$. We compare the simulation results for two scenarios: i) the case where $u_1(t)$ is identical to the followers' LTI CACC, and ii) the case where $u_1(t)$ follows the proposed law \eqref{eq:u1_full}. \rev{Although we don't provide any theoretical guarantee,} in order to illustrate the potential of the proposed control under realistic scenarios, in both cases the feedforward element of the control is affected by a transmission delay, i.e. $u_{i-1}(t-\theta_i)$ with $\theta_i=0.1,0.1,0.2,0.15$. The constraints $v_i\in [0,\vmax=10]$ and $\abs{u_i}\leq \amax=4$ are treated as \emph{hard} constraints and are enforced for all vehicles. Consequently, velocity non-negativity is always satisfied. The initial condition for $\Sigma_0$ is chosen such that $s_1(0)=0$ exactly, the safety distance $d_s$ and standstill distance $r$ are chosen as $0.5[m]$ and $1[m]$ respectively. The gain $\lambda=5.6$ is chosen exactly at the bound.
\begin{figure}[!hbt]
 \centering
 \vspace{-0.4cm}
  \subfigure[Standard CACC for all vehicles.]{\label{fig:ex1:cacc}\includegraphics[width=0.49\columnwidth,clip]{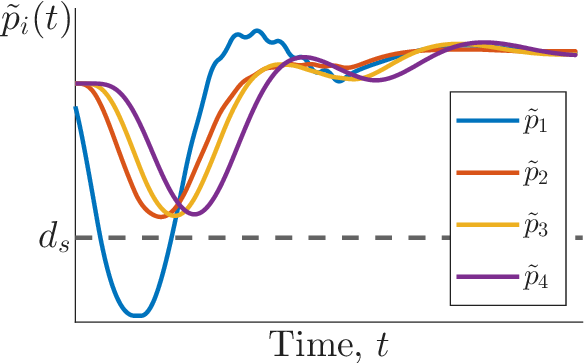}} \hfill
  \subfigure[First follower governed by \eqref{eq:u1_full}.]{\label{fig:ex1:nlc}\includegraphics[width=0.49\columnwidth,clip]{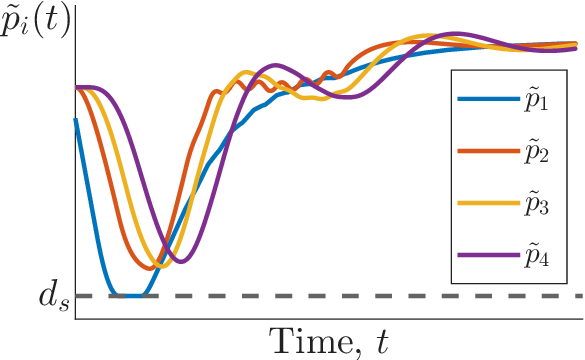}}
 \caption{Relative positions $\tilde{p}_i(t)$ for the platoon.} \label{fig:ex1}
 \vspace{-0.25cm}
\end{figure}
As shown in Fig.~\ref{fig:ex1:cacc}, the standard LTI controller fails to maintain safety despite the initial distance being $\tilde{p}_1(0)\approx 5 r$. In contrast, Fig.~\ref{fig:ex1:nlc} demonstrates that our approach ensures that the entire platoon remains safe throughout the aggressive transient. Fig.~\ref{fig:ex1:u} compares the control signals for both setups where, as expected, both scenarios display string-stable behavior for $i>1$. However, as seen in Fig.~\ref{fig:ex1:nlc_u}, the control law \eqref{eq:u1_full} displays a better tracking behavior of $u_0$, resulting in smaller oscillations in the following vehicles. Notably, safety is \rev{maintained in this simulation} despite $\bar{e}_1(0)=-5.4$ and the heterogeneous transmission delays\rev{, illustrating the potential robustness of the control scheme}.
\begin{figure}[!hbt] 
\vspace{-0.4cm}
 \centering
  \subfigure[Standard CACC for all vehicles.]{\label{fig:ex1:cacc_u}\includegraphics[width=0.49\columnwidth,clip]{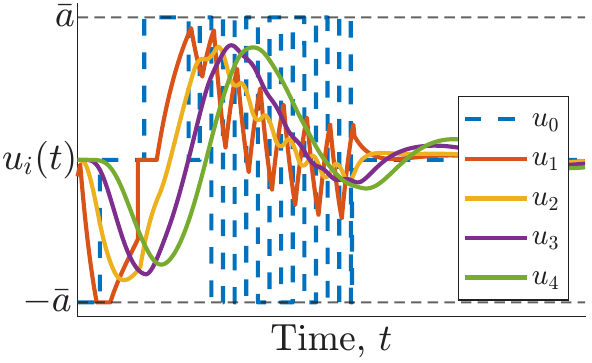}} \hfill
  \subfigure[First follower governed by \eqref{eq:u1_full}.]{\label{fig:ex1:nlc_u}\includegraphics[width=0.49\columnwidth,clip]{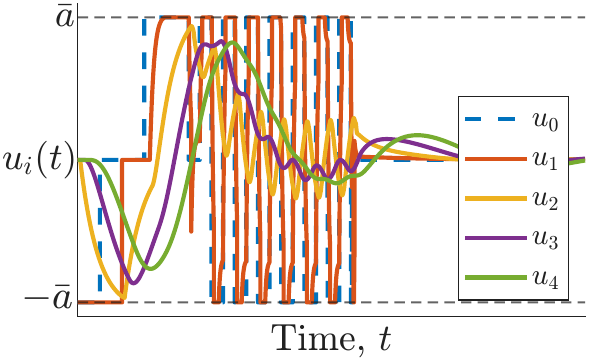}}
 \caption{Control signal (acceleration) of the leader (dashed) and the subsequent platoon.} \label{fig:ex1:u}
\end{figure}
\vspace{-0.6cm}
\section{Concluding remarks}\label{sec:conc}
%
We proposed a novel safe-by-design platoon merging architecture that augments the standard Constant Time Headway (CTH) policy with a velocity-dependent quadratic term. Derived from a worst-case analysis under bounded controls, this term ensures safe trajectories always exist. We designed a control law that maintains a simple Cooperative Adaptive Cruise Control (CACC) structure, renders the safe set forward invariant, and asymptotically recovers the standard CTH spacing as velocities synchronize. Furthermore, we demonstrated that under nominal conditions, safety is an inductive property: modifying only the first follower's control law guarantees downstream platoon safety, provided velocities remain non-negative. Because preliminary analysis indicates this non-negativity condition remains necessary under non-ideal conditions, future work will focus on enforcing this state constraint during control synthesis and conducting a formal robustness analysis to relax the nominal operation assumption (\ref{asm:cacc_stability}). 

\vspace{-0.20cm}
\bibliographystyle{IEEEtran}
\bibliography{MyBib}

\newcommand{\noopsort}[1]{}
\begin{thebibliography}{10}
\providecommand{\url}[1]{#1}
\csname url@samestyle\endcsname
\providecommand{\newblock}{\relax}
\providecommand{\bibinfo}[2]{#2}
\providecommand{\BIBentrySTDinterwordspacing}{\spaceskip=0pt\relax}
\providecommand{\BIBentryALTinterwordstretchfactor}{4}
\providecommand{\BIBentryALTinterwordspacing}{\spaceskip=\fontdimen2\font plus
\BIBentryALTinterwordstretchfactor\fontdimen3\font minus
  \fontdimen4\font\relax}
\providecommand{\BIBforeignlanguage}[2]{{%
\expandafter\ifx\csname l@#1\endcsname\relax
\typeout{** WARNING: IEEEtran.bst: No hyphenation pattern has been}%
\typeout{** loaded for the language `#1'. Using the pattern for}%
\typeout{** the default language instead.}%
\else
\language=\csname l@#1\endcsname
\fi
#2}}
\providecommand{\BIBdecl}{\relax}
\BIBdecl

\bibitem{SBS:79}
S.~Sklar, J.~Bevans, and G.~Stein, ``"safe-approach" vehicle-follower
  control,'' \emph{IEEE Trans.\ Vehicular Tech.}, vol.~28, no.~1, pp. 56--62,
  1979.

\bibitem{CI:92}
C.~C. Chien and P.~Ioannou, ``Automatic vehicle-following,'' in \emph{Proc. of
  American Control Conf.}, 1992, pp. 1748--1752.

\bibitem{SH:96}
D.~Swaroop and J.~Hedrick, ``String stability of interconnected systems,''
  \emph{IEEE Trans.\ Automat.\ Control}, vol.~41, no.~3, pp. 349--357, 1996.

\bibitem{KM:09}
S.~Klinge and R.~H. Middleton, ``String stability analysis of homogeneous
  linear unidirectionally connected systems with nonzero initial conditions,''
  in \emph{IET Irish Signals and Systems Conference}, 2009, pp. 1--6.

\bibitem{IC:93}
P.~Ioannou and C.~Chien, ``Autonomous intelligent cruise control,'' \emph{IEEE
  Trans.\ Vehicular Tech.}, vol.~42, no.~4, pp. 657--672, 1993.

\bibitem{DPH:17}
V.~S. Dolk, J.~Ploeg, and W.~P. M.~H. Heemels, ``Event-triggered control for
  string-stable vehicle platooning,'' \emph{IEEE Trans.\ Intelligent Transport.
  Syst.}, vol.~18, no.~12, pp. 3486--3500, 2017.

\bibitem{SB-L:24}
A.~Samii and N.~Bekiaris-Liberis, ``Predictor-based cacc design for
  heterogeneous vehicles with distinct input delays,'' \emph{IEEE Open Journal
  of Intelligent Transport. Sys.}, vol.~5, pp. 783--796, 2024.

\bibitem{MPD:25}
G.~Ma, P.~Pagilla, and S.~Darbha, ``Parasitic actuation delay limits the
  minimum employable time headway in connected and autonomous vehicles,''
  \emph{ArXiv}, 2025.

\bibitem{PSNWN:11}
J.~Ploeg, B.~T.~M. Scheepers, E.~van Nunen, N.~van~de Wouw, and H.~Nijmeijer,
  ``Design and experimental evaluation of cooperative adaptive cruise
  control,'' in \emph{Proc. of 14th IEEE Conf.\ Intelligent and Transport.
  Syst.}, 2011, pp. 260--265.

\bibitem{MSSNKN:14}
V.~Milanés, S.~E. Shladover, J.~Spring, C.~Nowakowski, H.~Kawazoe, and
  M.~Nakamura, ``Cooperative adaptive cruise control in real traffic
  situations,'' \emph{IEEE Trans.\ Intelligent Transport. Syst.}, vol.~15,
  no.~1, pp. 296--305, 2014.

\bibitem{MPD:24}
G.~Ma, P.~Pagilla, and S.~Darbha, ``Assessing the safety benefits of cacc+
  based coordination of connected and autonomous vehicle platoons in emergency
  braking scenarios,'' in \emph{IEEE Intelligent Vehicles Symposium}, 2024, pp.
  2248--2254.

\bibitem{GH:24}
R.~Gaagai and J.~Horn, ``Distributed safety-critical control for linear
  homogeneous vehicle platoons subject to actuator and communication delays,''
  in \emph{63rd IEEE Conf.\ Decision and Control}, 2024, pp. 8376--8383.

\bibitem{CTJM:24}
X.~Chen, Z.~Tang, K.~H. Johansson, and J.~Mårtensson, ``Safe platooning and
  merging control using constructive barrier feedback,'' \emph{European Journal
  of Control}, vol.~80, p. 101060, 2024.

\bibitem{MA:24}
S.~Mair and M.~Althoff, ``Provably correct safety protocol for cooperative
  platooning,'' in \emph{IEEE Intelligent Veh. Symposium}, 2024, pp. 780--787.

\bibitem{LS-KPAN:18}
J.~Ligthart, E.~Semsar-Kazerooni, J.~Ploeg, M.~Alirezaei, and H.~Nijmeijer,
  ``Controller design for cooperative driving with guaranteed safe behavior,''
  in \emph{2018 IEEE Conference on Control Technology and Applications (CCTA)},
  2018, pp. 1460--1465.

\bibitem{mehra2015adaptive}
A.~Mehra, X.~Ma, F.~Berg, P.~Tabuada, J.~Grizzle, and A.~Ames, ``Adaptive
  cruise control: {E}xperimental validation of advanced controllers on
  scale-model cars,'' in \emph{ACC}, 2015, pp. 1411--1418.

\bibitem{SLSM:25}
R.~Su, A.~Lahmadi, Y.-Q. Song, and J.-P. Mangeot, ``Assessing 5g connectivity
  for urbanloop: a pod-based autonomous railway transport system,'' in
  \emph{102nd IEEE Vehicular Tech. Conf.}\hskip 1em plus 0.5em minus
  0.4em\relax IEEE, 2025, pp. 1--7.

\bibitem{wvu_prt_2025}
{West Virginia University}, ``{WVU} personal rapid transit system
  https://prt.wvu.edu,'' 1975.

\bibitem{wang2026velocity}
W.~Wang, J.~Kreiss, P.~Lorenzetti, L.~Licitra, G.~Lefebvre, and R.~Postoyan,
  ``{Velocity Tracking for Autonomous Railway-based Urbanloop Pods by
  Contraction},'' in \emph{Proc.\ 23rd IFAC World Congress}, 2026.

\bibitem{PvdWN:14}
J.~Ploeg, N.~van~de Wouw, and H.~Nijmeijer, ``Lp string stability of cascaded
  systems: Application to vehicle platooning,'' \emph{IEEE Trans.\ Control
  Syst.\ Technol.}, vol.~22, no.~2, pp. 786--793, 2014.

\bibitem{blanchini1999set}
F.~Blanchini, ``Set invariance in control,'' \emph{Automatica}, vol.~35,
  no.~11, pp. 1747--1767, 1999.

\end{thebibliography}


\appendices
\gdef\thesection{\Alph{section}}
\makeatletter
\renewcommand\@seccntformat[1]{\appendixname\csname the#1\endcsname.\hspace{0.5em}}
\makeatother
\end{document}